\newtheorem{lemma}{Lemma}
\begin{document}
\captionsetup[figure]{labelfont={bf},name={Fig.},labelsep=period}
\title{Robust Resource Allocation for Multi-Antenna URLLC-OFDMA Systems in a Smart Factory}
\author{Jing Cheng$^\dag$, Chao Shen$^\dag$, Shuqiang Xia$^\ddag$\\
{\small $^\dag$State Key Laboratory of Rail Traffic Control and Safety, Beijing Jiaotong University, Beijing, China}\\
{\small $^\ddag$State Key Laboratory of Mobile Network and Mobile Multimedia Technology, ZTE Corporation, Guangdong, China}\\
{\small Email: \{chengjing, chaoshen\}@bjtu.edu.cn, xia.shuqiang@zte.com.cn}}
\maketitle
\begin{abstract}
In this paper, we investigate the worst-case robust beamforming design and resource block (RB) assignment problem for total transmit power minimization of the central controller while guaranteeing each robot's transmission with target number of data bits and within required ultra-low latency and extremely high reliability. By using the property of the independence of each robot's beamformer design, we can obtain the equivalent power control design form of the original beamforming design. The binary RB mapping indicators are transformed into continuous ones with additional $\ell_0$-norm constraints to promote sparsity on each RB. A novel non-convex penalty (NCP) approach is applied to solve such $\ell_0$-norm constraints. Numerical results demonstrate the superiority of the NCP approach to the well-known reweighted $\ell_1$ method in terms of the optimized power consumption, convergence rate and robustness to channel realizations. Also, the impacts of latency, reliability, number of transmit antennas and channel uncertainty on the system performance are revealed.
\end{abstract}
\begin{IEEEkeywords}
URLLC, beamforming design, resource block assignment, imperfect channel state information, non-convex penalty.
\end{IEEEkeywords}
\section{Introduction}
With the evolution into beyond $5$G or $6$G wireless networks, there is a paradigm shift in the primary service objects of networks from mobile phone users to massive low-power and low-cost machines. Such machine-to-machine (M2M) communication, popular in the Internet of things (IoT) network, can be divided into two types: massive IoT applications and critical IoT applications. As compared to massive connections and maximum throughput oriented massive IoT applications, critical IoT applications aim at ultra-reliable and low-latency communication (URLLC) \cite{SchulzICM2017}. More specifically, URLLC is required to transmit short packets (e.g. $32$ bytes) successfully within ultra-low latency (e.g. user-plane latency $1$ ms) and with no less than $99.999\%$ reliability (i.e. $10^{-5}$ packet error probability) \cite{KimPI2019}.

In URLLC scenarios, the coding blocklength is short and the decoding error probability becomes no longer negligible even if the transmission rate is below the Shannon limit. In this case, conventional resource allocation based on Shannon capacity achieved with infinite blocklength codes is not optimal, which necessitates the research on the resource allocation and transmission scheme design under the finite blocklength regime \cite{XuITWC2016}. Meanwhile, the coupling of high-reliability and low-latency renders such design very challenging. In addition, most of the existing works consider such problems generally assuming that the channel state information (CSI) can be perfectly known at the transmitter. For example, \cite{XuITWC2020} investigated the energy-efficient packet transmission problem for a two-user non-orthogonal multiple access (NOMA) downlink with heterogeneous latency constraints under the ideal CSI assumption. In \cite{RenITC2020}, the resource allocation problem for a secure mission-critical IoT communication system with URLLC was studied under strict assumption that all links' CSI is available at the transmitter. Actually, such solutions under ideal CSI assumption can serve as a performance benchmark, but it is not practical especially for URLLC scenarios.

Flexible new radio (NR) frame structure is proposed to support the ultra-low latency requirement  \cite{SachsIN2018}. There comes a critical challenge on resource and latency efficient scheduling in URLLC-orthogonal frequency division multiple access (OFDMA) systems. For example, a global optimal resource allocation scheme for URLLC service was proposed in \cite{SunITWC2019} by jointly optimizing uplink and downlink. However, the global optimal solution was based on the assumption that the channel gain is the same even for different sub-channels. Resource block (RB) assignment and power allocation problem of single-cell multiple URLLC users with perfect and imperfect CSI were examined in \cite{Ghanem2019} and \cite{Cheng2020}, respectively. However, the above works did not take the multiple antennas into account, which plays a key role in improving the link quality and reliability.

In this paper, we consider a smart factory scenario where a central controller has to send critical control commands to its serving robots with strict latency and reliability requirements. Multiple-antenna technique is used in the central controller to enhance the communication reliability. We investigate the total transmit power minimization problem by jointly optimizing RB assignment and power control design in the finite blocklength regime under the imperfect CSI assumption. The main contributions of this paper are summarized as follows.
\begin{itemize}
  \item A robust transmission scheme is proposed. By capturing the property of the independence of each robot's beamformer design, we equivalently transform the original beamforming design into a power control problem. In view of the binary and sparse constraints on each RB, we relax them into continuous variables and add $\ell_0$-norm constraints to guarantee the sparsity.
  \item A novel non-convex penalty (NCP) approach is applied to tackle the $\ell_0$-norm constraints. Further, a low-complexity penalized successive convex approximation (SCA) based iterative algorithm is proposed to efficiently solve the formulated joint RB assignment and power control problem.
  \item Simulation results show the performance superiority of NCP-based resource allocation to the reweighted $\ell_1$ method and analyze the impacts of key factors like latency, reliability, number of transmit antennas and channel uncertainty on the system performance.
\end{itemize}
\section{System and Channel Uncertainty Models}
\subsection{System Model}
Consider a smart factory scenario, where a central controller equipped with $N_t$ antennas has to send critical control commands to $K$ single-antenna robots indexed by $k\in\{1,\ldots,K\}$, as shown in Fig. \ref{systemModel}. The command packet with $B_k$ data bits for the $k$-th robot has to be successfully transmitted within $D_k$ OFDM symbols and with packet error probability $\varepsilon_k,\forall k$. We assume the requirements $\{B_k,D_k,\varepsilon_k\}_{k=1}^K$ are known at the central controller. The fundamental scheduling resource unit is a RB, which is composed of $12$ subcarriers in the frequency domain and $1$ OFDM symbol in the time domain \cite{Ghanem2019}. Denote the number of RBs and OFDM symbols for scheduling by $M$ and $N$, respectively. Since one packet can be transmitted over multiple RBs, we define a binary variable $\phi_{mnk}\in\{0,1\}$ to indicate the RB mapping. If the RB $m$ in OFDM symbol $n$ is allocated to robot $k$, we have $\phi_{mnk}=1$, otherwise $\phi_{mnk}=0$, where $m\in \{1,\ldots,M\},~n\in\{1,\ldots,N\}$. We assume that each RB can be allocated to at most one robot, which can be characterized by
\begin{equation}
  \sum_{k=1}^K \phi_{mnk} \leq 1,\forall m,n.
\end{equation}
\begin{figure}[htbp]\centering
	\includegraphics[width=.66\linewidth]{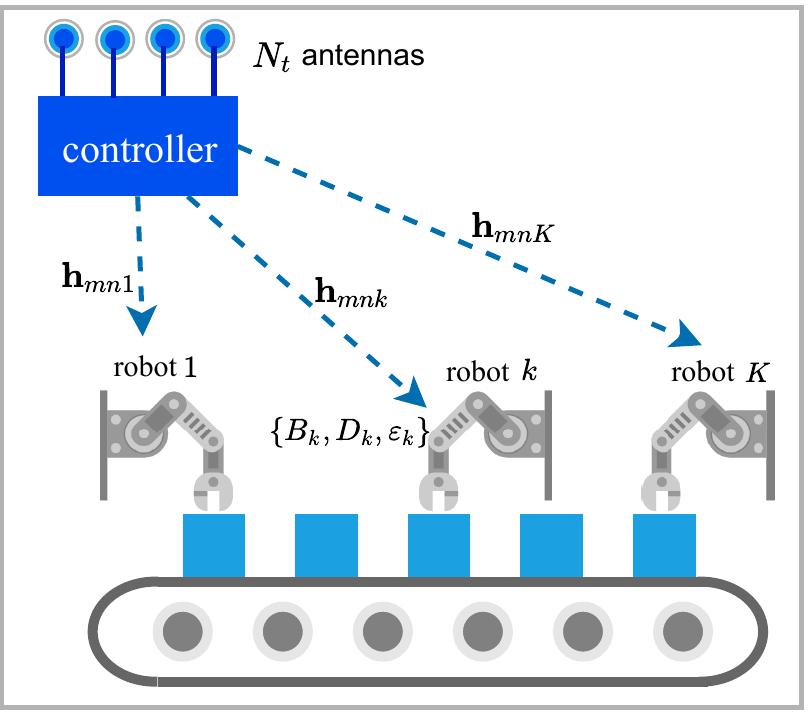}
	\caption{Multi-robot downlink URLLC transmission in a smart factory scenario.}\vspace{-5mm}
	\label{systemModel}
\end{figure}
\subsection{Channel Uncertainty Modeling}
Under the quasi-static block fading channel assumption, the received signal of user $k$ in the RB $m$ and OFDM symbol $n$ can be expressed as
\begin{equation}
  y_{mnk}=\phi_{mnk}\mathbf{h}_{mnk}^H \mathbf{w}_{mnk}d_{mnk}+z_{mnk},
\end{equation}
where $\mathbf{h}_{mnk}\in \mathbb{C}^{N_t}$ is the channel realization, $\mathbf{w}_{mnk}\in \mathbb{C}^{N_t}$ is the transmit beamforming vector, $d_{mnk}$ denotes the independent complex data symbol with power normalized to be unit, and $z_{mnk}$ is the zero mean circular symmetric complex Gaussian noise with variance $\sigma^2$.

However, owing to practical limitations such as channel mismatch and quantization error, perfect CSI is not available at the central controller. This is true especially for the mission-critical scenarios where the  transmission time interval (TTI) is very short and the time for channel training is highly restricted. In such situation, the channel realization $\mathbf{h}_{mnk}$ can be modeled as \cite{ShenITSP2012}
\begin{equation}
  \mathbf{h}_{mnk}=\hat{\mathbf{h}}_{mnk}+\mathbf{e}_{mnk},\forall m,n,k,
\end{equation}
where $\hat{\mathbf{h}}_{mnk}\in \mathbb{C}^{N_t}$ denotes the channel estimate, $\mathbf{e}_{mnk}\in \mathbb{C}^{N_t}$ is the channel estimation error which lies in a bounded set $\mathcal{E}_{mnk}=\{\mathbf{e}_{mnk}: \|\mathbf{e}_{mnk}\|_2^2\leq \delta_{mnk}^2\}$.

Under this bounded CSI error model, if the packet towards robot $k$ is transmitted in RB $m$ and OFDM symbol $n$, the worst-case received signal-to-noise ratio (SNR) of the $k$-th robot is given by
\begin{equation}\label{rho}
  \rho_{mnk}\!\!=\!\!\min_{\mathbf{e}_{mnk}\in\mathcal{E}_{mnk}}
  \frac{\left|\left(\hat{\mathbf{h}}_{mnk}\!+\!\mathbf{e}_{mnk}\right)^H\!\mathbf{w}_{mnk}\right|^2}
  {\sigma^2},\forall m,n,k.
\end{equation}
\section{Power Efficient Scheduling and Beamforming}
The typical characteristics of URLLC traffic are short-packet transmission, ultra-low latency and ultra-high reliability. The famous Polyanskiy-Poor-Verdu bound is a tight normal approximation to characterize the maximum achievable rate of short packets under AWGN channel conditions in the finite blocklength regime \cite{PolyanskiyITIT2010}. Then it has been extended to fading channels \cite{YangITIT2014}. Based on the Polyanskiy-Poor-Verdu bound and the joint channel coding scheme where one packet is encoded over all scheduled RBs, we characterize the worst-case maximum number of received data bits $R_k$ of the $k$-th robot by
\begin{align}\label{Rk}
  R_k&=\sum_{m=1}^M \sum_{n=1}^N \phi_{mnk}\log_2(1+\rho_{mnk})\notag\\
  &~~~~~~~~~~~~~-\sqrt{\sum_{m=1}^M \sum_{n=1}^N \phi_{mnk}V_{mnk}}\frac{Q^{-1}(\varepsilon_k)}{\ln 2},
\end{align}
where $V_{mnk}=1-(1+\rho_{mnk})^{-2}$ is the channel dispersion, $Q^{-1}(\varepsilon_k)$ is the inverse of Q-fucntion $Q(\varepsilon_k)=\int_{\varepsilon_k}^{\infty} \frac{1}{\sqrt{2 \pi}} e^{-t^{2}/2}dt$. In this paper, we adopt the approximation $V_{mnk}\approx 1$ based on the following two considerations: 1) the approximation is accurate enough when the received SNR is high enough, e.g. $3$ dB, as adopted in the current research works \cite{RenITC2020,SheITWC2018}; 2) this approximation actually serves as a lower bound of $R_k$, which results in a more stringent requirement.

In this paper, we are interested in the worst-case robust beamforming design and RB assignment problem under the finite blocklength regime. The objective is to minimize the total transmit power of the central controller while guarantee the specified QoS requirements $\{B_k,D_k,\varepsilon_k\}_{k=1}^K$ with imperfect CSI, which is given by
\begin{subequations}\label{P1}
\begin{align}
\mathrm{P1}:~\min_{\Phi,\mathcal{W}}~~&p_\mathrm{tot}\triangleq \sum_{k=1}^K\sum_{m=1}^M\sum_{n=1}^N \phi_{mnk}\|\mathbf{w}_{mnk}\|_2^2\label{P1a}\\
\mathrm{s.t.}~~ &R_k\geq B_k,\forall k,\label{P1b}\\
&\phi_{mnk}\in \{0,1\},\forall m,n,k,\label{P1c}\\
&\sum_{k=1}^K \phi_{mnk} \leq 1,\forall m,n,\label{P1d}\\
&\phi_{mnk}=0,\forall n>D_k,\forall m,k,\label{P1e}\\
&\|\mathbf{w}_{mnk}\|_2^2\!\leq\!\phi_{mnk}P_{\max},\forall m,n,k,\label{P1f}
\end{align}
\end{subequations}
where $\Phi=\{\phi_{mnk},\forall m,n,k\}$ is the set of binary scheduling variables and $\mathcal{W}=\{\mathbf{w}_{mnk},\forall m,n,k\}$ is the set of beamforming vectors. The constraint \eqref{P1b} means that for any robot $k$, the received number of data bits $R_k$ has to reach the target payload demand of $B_k$ information bits under all possible CSI errors. Towards this end, we have to implement the worst-cast design. Constraints \eqref{P1c} and \eqref{P1d} require that each RB is allocated to at most one robot. The delay requirement that the packet towards robot $k$ has to be successfully transmitted within $D_k$ OFDM symbols is given by \eqref{P1e}. The constraint \eqref{P1f} is the maximum transmit power constraint and guarantees that the corresponding power of beamformer is zero if $\phi_{mnk}=0$.

The problem $\mathrm{P1}$ is a mixed-integer non-convex problem. Its main challenges lie in the binary scheduling variables, the coupling structure of optimization variables, and the infinite number of strictly non-convex constraints due to the channel uncertainty in \eqref{P1b}. In general, this problem is NP-hard and seeking for the globally optimal solution by exhaustive search method will lead to extremely high computational complexity. Hence, it is necessary to develop a low-complexity algorithm to approximately solve the problem $\mathrm{P1}$ such that the robust beamforming design and RB assignment can be efficiently performed.
\section{Robust Transmission Design}
In this section, we propose a robust transmission scheme to obtain an at least sub-optimal solution for the problem $\mathrm{P1}$. First, by virtue of the binary nature of scheduling variables, we introduce a slack variable to replace the coupled term $\phi_{mnk}\mathbf{w}_{mnk}$. Furthermore, based on the independence of each robot's beamformer design, we present Lemma \ref{lemma1} to equivalently transform the original beamforming design problem into a power control problem. Eventually, $\ell_0$-norm constraints are added to guarantee the sparsity scheduling when binary scheduling variables are relaxed into continuous ones. A novel NCP approach is applied to deal with such $\ell_0$-norm constraints.

Now, let us proceed to the details. In view of the coupled nature of variables $\phi_{mnk}$ and $\mathbf{w}_{mnk}$ in the problem $\mathrm{P1}$, we introduce a slack variable $\mathbf{s}_{mnk}$ to replace $\phi_{mnk}\mathbf{w}_{mnk}$. Thus, from \eqref{Rk}, the maximum number of received data bits can be equivalently expressed as
\begin{equation}
  \bar{R}_k\triangleq\sum_{m=1}^M \sum_{n=1}^N \phi_{mnk}\log_2\left(1+\frac{\bar{\rho}_{mnk}}{\phi_{mnk}}\right)-\sqrt{l_k}\frac{Q^{-1}(\varepsilon_k)}{\ln 2},
\end{equation}
where $l_k=\sum_{m=1}^M \sum_{n=1}^N \phi_{mnk}$ and
\begin{equation}
  \bar{\rho}_{mnk}\!=\!\min_{\mathbf{e}_{mnk}\in\mathcal{E}_{mnk}}\frac{\left|\left(\hat{\mathbf{h}}_{mnk}+
  \mathbf{e}_{mnk}\right)^H\mathbf{s}_{mnk}\right|^2}{\sigma^2}.
\end{equation}
Accordingly, the total transmit power of the central controller can be alternatively given by
\begin{equation}
  \bar{p}_\mathrm{tot}\triangleq\sum_{k=1}^K\sum_{m=1}^M\sum_{n=1}^N \|\mathbf{s}_{mnk}\|_2^2.
\end{equation}

Based on the above transformations, the problem $\mathrm{P1}$ can be reformulated as
\begin{subequations}\label{P2}
\begin{align}
\mathrm{P2}:~\min_{\Phi,\mathcal{S}}~~&\bar{p}_\mathrm{tot}\label{P2a}\\
\mathrm{s.t.}~~ &\bar{R}_k\geq B_k,\forall k,\label{P2b}\\
&\|\mathbf{s}_{mnk}\|_2^2\leq\phi_{mnk}P_{\max},\forall m,n,k,\label{P2c}\\
&\eqref{P1c}-\eqref{P1e},
\end{align}
\end{subequations}
where $\mathcal{S}=\{\mathbf{s}_{mnk},\forall m,n,k\}$.

Now, the key challenge of the problem $\mathrm{P2}$ is to tackle the infinite number of strictly non-convex constraints in \eqref{P2b} and binary variable constraints in \eqref{P1c}.

First, for a given set of CSI estimation errors $\{\mathbf{e}_{mnk},\forall m,n,k\}$, the constraint \eqref{P2b} is highly intractable. Note that $\bar{R}_k$ is strictly increasing in $\bar{\rho}_{mnk}$. Thus, a standard routine \cite{JoshiITC2015} is to introduce an auxiliary variable as the lower bound of $\bar{\rho}_{mnk}$, and then apply S-procedure to equivalently convert this constraint into a linear matrix inequality (LMI) constraint. Through some optimization techniques such as SCA and semidefinite relaxation (SDR), we can eventually transform the original problem into a semidefinite program (SDP). However, it is well-known that the complexity of SDP is relatively high. Hence, in this work, by exploiting the property of the independence of each robot's beamformer design, we simplify the original beamforming design into a power control design. More specifically, it is depicted in Lemma \ref{lemma1}.
\begin{lemma}\label{lemma1}
The worst-case $\bar{\rho}_{mnk}$ of $\mathrm{P2}$ is given by
\begin{equation}
  \bar{\rho}_{mnk}=\frac{p_{mnk}\left({\|\hat{\mathbf h}_{mnk}\|_2} -\delta_{mnk}\right)^2}{\sigma^2},
\end{equation}
with
\begin{equation}
  \mathbf{s}_{mnk}^*=\sqrt{p_{mnk}}\frac{\hat{\mathbf{h}}_{mnk}}{\|\hat{\mathbf{h}}_{mnk}\|_2},
\end{equation}
where $p_{mnk}\geq0$ is the power of the beamformer $\mathbf{s}_{mnk},\forall m,n,k$.
\end{lemma}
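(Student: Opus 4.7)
The plan is to decouple the inner robust maximin structure from the outer RB-coupled rate constraint, and then show that the maximum-ratio-transmission (MRT) direction is without loss of optimality, leaving only the scalar powers $p_{mnk}$ to be optimized. The key observation that makes this work is that in P2 the objective $\bar{p}_\mathrm{tot}=\sum\|\mathbf{s}_{mnk}\|_2^2$ depends on $\mathbf{s}_{mnk}$ only through its squared norm, while each $\mathbf{s}_{mnk}$ enters the feasible set only through the single term $\bar{\rho}_{mnk}$ in $\bar{R}_k$ (which is monotonically increasing in $\bar{\rho}_{mnk}$) and through constraint \eqref{P2c}, also a norm constraint. So the direction of $\mathbf{s}_{mnk}$ can be chosen to maximize $\bar{\rho}_{mnk}$ for any fixed squared norm.

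Concretely, I would first fix $p_{mnk}=\|\mathbf{s}_{mnk}\|_2^2\geq 0$ and write $\mathbf{s}_{mnk}=\sqrt{p_{mnk}}\,\mathbf{u}_{mnk}$ with $\|\mathbf{u}_{mnk}\|_2=1$. The problem then splits into the inner robust subproblem of computing
\begin{equation*}
\min_{\|\mathbf{e}\|_2\leq\delta_{mnk}}\bigl|(\hat{\mathbf{h}}_{mnk}+\mathbf{e})^H\mathbf{u}_{mnk}\bigr|^2
\end{equation*}
followed by maximization over the unit vector $\mathbf{u}_{mnk}$. Using $\|\mathbf{u}_{mnk}\|_2=1$ and the triangle and Cauchy--Schwarz inequalities, one gets $|\mathbf{e}^H\mathbf{u}_{mnk}|\leq\delta_{mnk}$, so the modulus of the inner product lies in the disc of radius $\delta_{mnk}$ centered at $\hat{\mathbf{h}}_{mnk}^H\mathbf{u}_{mnk}$. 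The minimum modulus is therefore $\max\!\bigl(|\hat{\mathbf{h}}_{mnk}^H\mathbf{u}_{mnk}|-\delta_{mnk},\,0\bigr)$, attained by choosing $\mathbf{e}$ to be $-\delta_{mnk}\,e^{j\angle(\hat{\mathbf{h}}_{mnk}^H\mathbf{u}_{mnk})}\mathbf{u}_{mnk}$ (the worst-case error aligned along $\mathbf{u}_{mnk}$ with the correct phase).

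Next, I would maximize the resulting worst-case signal power over the unit-norm direction. Cauchy--Schwarz gives $|\hat{\mathbf{h}}_{mnk}^H\mathbf{u}_{mnk}|\leq\|\hat{\mathbf{h}}_{mnk}\|_2$, with equality at the MRT direction $\mathbf{u}_{mnk}^\star=\hat{\mathbf{h}}_{mnk}/\|\hat{\mathbf{h}}_{mnk}\|_2$, so
\begin{equation*}
\max_{\|\mathbf{u}\|_2=1}\!\bigl[\max(|\hat{\mathbf{h}}_{mnk}^H\mathbf{u}|-\delta_{mnk},0)\bigr]^2=\bigl(\|\hat{\mathbf{h}}_{mnk}\|_2-\delta_{mnk}\bigr)^2,
\end{equation*}
(under the implicit but standard assumption $\|\hat{\mathbf{h}}_{mnk}\|_2>\delta_{mnk}$, else the worst-case SNR collapses to zero for every direction). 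Substituting $\mathbf{s}_{mnk}^\star=\sqrt{p_{mnk}}\,\mathbf{u}_{mnk}^\star$ yields the stated formula for $\bar{\rho}_{mnk}$.

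Finally, I would close the argument by noting that this directional choice is optimal for P2, not merely an upper bound: given any feasible $\mathcal{S}$, replacing each $\mathbf{s}_{mnk}$ by $\sqrt{\|\mathbf{s}_{mnk}\|_2^2}\,\mathbf{u}_{mnk}^\star$ leaves the objective \eqref{P2a} and constraint \eqref{P2c} unchanged, while weakly increasing every $\bar{\rho}_{mnk}$ and hence every $\bar{R}_k$, so \eqref{P2b} continues to hold. Therefore the beamformer design reduces without loss of generality to choosing the nonnegative scalars $p_{mnk}=\|\mathbf{s}_{mnk}^\star\|_2^2$. The main delicate step is this last optimality argument: one must be careful that the monotonicity of $\bar{R}_k$ in each individual $\bar{\rho}_{mnk}$, together with the fact that only $\|\mathbf{s}_{mnk}\|_2$ (not its direction) appears elsewhere in P2, really does let us plug in the MRT direction RB by RB without violating any cross-RB coupling through $l_k$ or \eqref{P1c}--\eqref{P1e}; since $\phi_{mnk}$ and $l_k$ are untouched by the substitution, this is indeed the case.
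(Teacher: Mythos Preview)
Your proof is correct and reaches the same conclusion as the paper, but the inner max--min is handled by a different route. The paper bounds $\max_{\mathbf{s}}\min_{\mathbf{e}}$ from above by $\min_{\mathbf{e}}\max_{\mathbf{s}}$ via weak max--min duality, evaluates the latter (inner MRT to $\hat{\mathbf{h}}+\mathbf{e}$, then outer minimization over $\mathbf{e}$), and finally checks that the resulting pair $(\mathbf{s}^\star,\mathbf{e}^\star)$ is a saddle point so that equality holds. You instead solve the nested problem directly: for each fixed unit direction $\mathbf{u}$ you compute the inner minimum over $\mathbf{e}$ in closed form (the disc argument giving $\max(|\hat{\mathbf{h}}^H\mathbf{u}|-\delta,0)$), and then maximize over $\mathbf{u}$ by Cauchy--Schwarz. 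Your approach is slightly more elementary, since it never invokes duality or a saddle-point verification, and it also makes explicit the degenerate regime $\|\hat{\mathbf{h}}_{mnk}\|_2\le\delta_{mnk}$ that the paper leaves implicit. The paper's route, on the other hand, makes the structure of the worst-case error $\mathbf{e}^\star=-\delta_{mnk}\hat{\mathbf{h}}_{mnk}/\|\hat{\mathbf{h}}_{mnk}\|_2$ fall out immediately from the min--max side.

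One small slip: your explicit worst-case error should be $\mathbf{e}=-\delta_{mnk}\,e^{-j\angle(\hat{\mathbf{h}}_{mnk}^H\mathbf{u}_{mnk})}\mathbf{u}_{mnk}$ (conjugate phase), so that $\mathbf{e}^H\mathbf{u}_{mnk}$ cancels the phase of $\hat{\mathbf{h}}_{mnk}^H\mathbf{u}_{mnk}$; this does not affect the argument. Your closing optimality paragraph, showing that the MRT substitution preserves the objective and \eqref{P2c} while only improving \eqref{P2b}, is exactly the monotonicity reasoning the paper summarizes in one sentence.
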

\begin{proof}
From (7), one can readily observe that $\bar R_k$ is strictly increasing in $\bar\rho_{mnk}$, which clearly indicates that for any fixed-power beamformer, say, $\|{\mathbf s}_{mnk}\|_2^2=p_{mnk}$, the optimal beamformer to $\mathrm{P2}$ should be designed to maximize the achievable worst-case SNR, namely,
\begin{equation}\label{maxmin}
\begin{split}
  \mathbf {s}_{mnk}=\arg\max_{{\mathbf s}_{mnk}}~~&\bar \rho_{mnk}\\
  \mathrm{s.t.}\quad &\|{\mathbf s}_{mnk}\|_2^2=p_{mnk},
\end{split}
\end{equation}
for all $m,n,k$.

Based on the max-min property, we have
\begin{subequations}
\begin{align}
&\max_{{\mathbf s}_{mnk}}\min_{{\mathbf e}_{mnk}}~\left|(\hat{\mathbf h}_{mnk}+{\mathbf e}_{mnk})^H{\mathbf s}_{mnk}\right|^2\notag\\
&~~~~~~~~~~\leq \min_{{\mathbf e}_{mnk}} \max_{{\mathbf s}_{mnk}}~\left|(\hat{\mathbf h}_{mnk}+{\mathbf e}_{mnk})^H{\mathbf s}_{mnk}\right|^2 \label{maxmin1}\\
&~~~~~~~~~~=\min_{{\mathbf e}_{mnk}}~ p_{mnk}\|\hat{\mathbf h}_{mnk}+{\mathbf e}_{mnk}\|_2^2 \label{maxmin2}\\
&~~~~~~~~~~=p_{mnk}\left({\|\hat{\mathbf h}_{mnk}\|_2} -\delta_{mnk}\right)^2,
\end{align}
\end{subequations}
where \eqref{maxmin1} holds with ${\mathbf s}_{mnk}={\mathbf s}_{mnk}^*\triangleq\sqrt{p_{mnk}}(\hat{\mathbf h}_{mnk}+{\mathbf e}_{mnk})/\|\hat{\mathbf h}_{mnk}+{\mathbf e}_{mnk}\|_2$,  while \eqref{maxmin2} holds with ${\mathbf e}_{mnk}={\mathbf e}_{mnk}^*\triangleq -\delta_{mnk}\hat{\mathbf h}_{mnk}/\|{\mathbf h}_{mnk}\|_2$. And we can easily show that the equality in \eqref{maxmin} holds true with ${\mathbf s}_{mnk}={\mathbf s}_{mnk}^*$ and ${\mathbf e}_{mnk}={\mathbf e}_{mnk}^*$. This completes the proof.
\end{proof}

Therefore, let us consider the following  equivalent problem of $\mathrm{P2}$:
\begin{subequations}
\begin{align}
\mathrm{P3}:~~\min_{\Phi,\mathcal{P}}~&\tilde{p}_\mathrm{tot}\triangleq \sum_{k=1}^K\sum_{m=1}^M\sum_{n=1}^N p_{mnk}\label{P3a}\\
\mathrm{s.t.}~~ &\sum_{m=1}^{M} \sum_{n=1}^{N} \phi_{mnk} \log _{2}\left(1+\frac{g_{mnk}p_{mnk}}{\phi_{mnk}}\right)\notag\\
&~~~~~~~~~~-\sqrt{l_k} \frac{Q^{-1}\left(\varepsilon_{k}\right)}{\ln 2}\geq B_k,\forall k,\label{P3b}\\
&0\leq p_{mnk}\leq\phi_{mnk}P_{\max},\forall m,n,k,\label{P3c}\\
&\eqref{P1c}-\eqref{P1e},
\end{align}
\end{subequations}
where $\mathcal{P}=\{p_{mnk},\forall m,n,k\}$ is the set of powers and $g_{mnk}\triangleq \left({\|\hat{\mathbf h}_{mnk}\|_2} -\delta_{mnk}\right)^2/\sigma^2$.

Note that the constraint \eqref{P3b} is in the form of difference-of-concave function. By the first-order Taylor approximation of the concave function $\sqrt{l_k}$, we can obtain a locally tight upper bound $\frac{l_k+l_k^{(i)}}{2\sqrt{l_k^{(i)}}}$, where $l_k^{(i)}$ is the value of $l_k$ in the $i$th iteration. Based on this approximation, the constraint \eqref{P3b} can be approximated by a convex one as
\begin{align}\label{convexRk}
  &\sum_{m=1}^{M}\sum_{n=1}^{N}\phi_{mnk} \log _{2}\left(1+\frac{g_{mnk}p_{mnk}}{\phi_{mnk}}\right)\notag\\
  &~~~~~~~~~~~~~~~~~~~~~~~~-\frac{l_k+l_k^{(i)}}
  {2\sqrt{l_k^{(i)}}}\frac{Q^{-1}(\varepsilon_k)}{\ln 2}\geq B_k,\forall k.
\end{align}

Next, let us deal with the binary variables $\phi_{mnk},\forall m,n,k$. We first relax them into continuous ones, i.e., $\phi_{mnk}\in[0,1],\forall m,n,k$. Considering the requirement that each RB is allocated to at most one robot, we have the constraints
\begin{equation}\label{ell_0}
  \|\pmb{\phi}_{mn}\|_0\le 1,\forall m,n,
\end{equation}
to guarantee this sparsity requirement on each RB, where $\pmb{\phi}_{mn}\in \mathbb{R}_{+}^K$ is defined as $\pmb{\phi}_{mn}=[\phi_{mn1},\cdots,\phi_{mnK}]^T$. For the $\ell_0$-norm constraint, reweighted $\ell_1$ is a well-known method by converting this $\ell_0$-norm constraint into a weighted $\ell_1$-norm constraint in each iteration \cite{CandesJoFAA2007}. Here, we adopt a novel NCP approach to get the sparse scheduling solutions \cite{Wang2019}. To be more specific, the principle of the NCP approach is as follows. For any vector $\mathbf{x}\in\mathbb{R}^{n}$, it has at most one non-zero entry if and only if
\begin{equation}
  \|\mathbf{x}\|_a=\|\mathbf{x}\|_b,~1\leq a<b,
\end{equation}
where $\|\cdot\|_a$ and $\|\cdot\|_b$ represent $\ell_a$-norm and $\ell_b$-norm, respectively. Moreover, we have
\begin{equation}
    \|\mathbf{x}\|_a^v=\|\mathbf{x}\|_b^v,~1\leq a<b,
\end{equation}
by adding a power exponent $v>0$. So the constraint \eqref{ell_0} can be expressed in the following equivalent form
\begin{equation}
  \|\pmb{\phi}_{mn}\|_a^v=\|\pmb{\phi}_{mn}\|_b^v,~1\leq a<b.
\end{equation}
In general, $\|\pmb{\phi}_{mn}\|_a^v-\|\pmb{\phi}_{mn}\|_b^v\geq 0$ always holds for some $a,b$ with $1\leq a<b$. In this paper, we consider a smooth penalty by choosing $a=1,b=2,v=2$.

To promote the sparsity that $\pmb{\phi}_{mn}$ has at most one non-zero entry, we add a penalty term to the objective function, which is given by
\begin{equation}
  \mathcal{F}(\pmb{\phi}_{mn})=\frac{\lambda}{2}\sum_{m=1}^M \sum_{n=1}^N\left(\|\pmb{\phi}_{mn}\|_1^2-\|\pmb{\phi}_{mn}\|_2^2\right),
\end{equation}
where $\lambda>0$ is a penalty factor. Note that the penalty term is in the form of difference-of-convex function. To render the objective function convex, we apply the first-order Taylor approximation to the convex function $\|\pmb{\phi}_{mn}\|_2^2$, which is given by
\begin{equation}
  \|\pmb{\phi}_{mn}\|_2^2\approx 2 \left(\pmb{\phi}_{mn}^{(i)}\right)^T\!\pmb{\phi}_{mn}-\|\pmb{\phi}_{mn}^{(i)}\|_2^2,
\end{equation}
where $\pmb{\phi}_{mn}^{(i)}$ is the value of $\pmb{\phi}_{mn}$ in the $i$th iteration. Hence, the penalty term can be correspondingly approximated by
\begin{equation}
  \mathcal{F}(\pmb{\phi}_{mn})\!\approx\!\frac{\lambda}{2}\!\sum_{m=1}^M \sum_{n=1}^N\left(\!\|\pmb{\phi}_{mn}\|_1^2\!-\!
2\left(\!\pmb{\phi}_{mn}^{(i)}\right)^T\!\pmb{\phi}_{mn}\!+\!\|\pmb{\phi}_{mn}^{(i)}\|_2^2\!\right).
\end{equation}

Based on this NCP approach, we have the following penalized convex problem
\begin{subequations}
\begin{align}
\mathrm{P4}:~~\min_{\Phi,\mathcal{P}}~&\tilde{p}_\mathrm{tot}+\mathcal{F}(\pmb{\phi}_{mn})\label{P4a}\\
\mathrm{s.t.}~ &0\leq \phi_{mnk}\leq 1,\forall m,n,k,\label{P4b}\\
&\eqref{P1d}-\eqref{P1e},\eqref{P3c},\eqref{convexRk}.
\end{align}
\end{subequations}

According to the preceding analysis, now we propose a penalized SCA based iterative algorithm to solve the problem $\mathrm{P1}$. The algorithm is summarized in Algorithm \ref{Alg1}. Note that a sequence of the penalized problem $\mathrm{P4}$ can be efficiently solved, which can yield a stationary-point solution of the problem $\mathrm{P1}$ after convergence \cite{Wang2019}.
\begin{algorithm}[t]
\small
\caption{\small \textbf{:} Penalized SCA based Algorithm for Solving $\mathrm{P1}$}
\label{Alg1}
\begin{spacing}{1.2}
	\begin{algorithmic}[1]
		\STATE \textbf{Initialize}~iteration index $i=0$, feasible $\phi_{mnk}^{(0)},\forall m,n,k$, initial penalty factor $\lambda^{(0)}>0$, $\eta>1$, tolerance $\epsilon>0$.
		\STATE Calculate $l_k^{(0)} =\sum_{m=1}^{M} \sum_{n=1}^{N} \phi_{m n k}^{(0)},\forall k$.
		\REPEAT
        \STATE Set $i=i+1$.
		\STATE Obtain $\{\phi_{mnk},p_{mnk}\}$ by solving $\mathrm{P4}$ and restore $\tilde{p}_\mathrm{tot}^{(i)}$.
        \STATE Update $\phi_{mnk}^{(i)}=\phi_{mnk},\forall m,n,k$.			
		\STATE Update $l_k^{(i)}=\sum_{m=1}^M\sum_{n=1}^N \phi_{mnk}^{(i)},\forall k$.
        \STATE Update $\lambda^{(i)}=\eta \lambda^{(i-1)}$.
		\UNTIL $|\tilde{p}_\mathrm{tot}^{(i)}-\tilde{p}_\mathrm{tot}^{(i-1)}|\leq \epsilon$ and $\mathcal{F}(\pmb{\phi}_{mn})\leq \epsilon$.
		\STATE \textbf{Output} $\phi_{mnk}$ and $p_{mnk}$ for all $m,n,k$.
	\end{algorithmic}
\end{spacing}
\end{algorithm}
Based on the solutions $\Phi,\mathcal{P}$, we can easily recover the beamformer $\mathbf{w}_{mnk}$ of the original problem $\mathrm{P1}$ by $\sqrt{p_{mnk}}\hat{\mathbf h}_{mnk}/\|\hat{\mathbf h}_{mnk}\|_2$ for all $m,n,k$.
\section{Numerical Results}
In this section, we present some numerical results to evaluate the performance of the proposed robust transmission scheme.

In the simulation, $d_k$ in meter denotes the distance from the controller to the robot $k$. We set $d_1=100,d_2=240,d_3=180,d_4=300$. The small-scale fading components of all channel estimates are assumed to be independent and identically distributed circularly symmetric complex Gaussian random variables with zero mean and unit variance. We generate $100$ channel realizations and take their average as the simulation result. Other parameters are listed in Table \ref{table1}, unless otherwise specified.
\begin{table}[htbp]\footnotesize{
\centering
\caption{Simulation Parameters}
\label{table1}
\setlength{\tabcolsep}{1.2mm}{
\begin{tabular}{lll}
  \toprule
  Symbol & Parameter & Value\\
  \midrule
  $K$ & Number of robots & $4$\\
  $M$ & Number of RBs & $10$\\
  $N$ & Number of OFDM symbols & $6$\\
  $N_t$ & Number of transmit antennas & $2$\\
  $P_{\max}$ & Maximum transmit power & $30$ dBm\\
  $W$ &Each RB bandwidth &180 kHz\\
  $\varepsilon_k\!\!=\!\!\varepsilon,\forall k$ & Packet error probability & $10^{-6}$\\
  $B_k\!=\!B,\forall k$ & Number of data bits & $40$\\
  $\delta^2$ & CSI error bound & $0.01$\\
  $D_k,\forall k$ & Delay of robots & $D_1\!\!=\!\!D_2\!\!=\!\!2,D_3\!\!=\!\!3,D_4\!\!=\!\!4$\\
  $N_0$ & Noise power spectral density & $-173$ dBm/Hz\\
  $\mathrm{PL}_k$ & Path loss model (dB) & $35.3+37.6\log_{10}(d_k)$\\
  $\lambda^{(0)}$ & Initial penalty factor & $0.001$\\
  $\eta$ & Scaling factor & $1.8$\\
  $\xi$ & Parameter in reweighted $\ell_1$ & $0.01$\\
  \bottomrule
\end{tabular}}}
\end{table}

Fig. \ref{PerformanceContrast} compares the performance of NCP and reweighted $\ell_1$ in terms of power consumption and convergence rate. As can be seen from Fig. \ref{PerformanceContrast}(a), under $100$ channel realizations, the performance of NCP-based robust resource allocation is always the same or superior to the reweighted $\ell_1$ method. Moreover, in Fig. \ref{PerformanceContrast}(b), the convergence rate of NCP is significantly faster than that of reweighted $\ell_1$, which is an extremely important factor for the delay-sensitive applications especially for URLLC scenarios. In addition, the convergence performance of the NCP-based algorithm is more robust to channel realizations than that of the reweighted $\ell_1$ method.
\begin{figure}[!t]\centering
	\subfloat[$\tilde{p}_\mathrm{tot}$ under $100$ channel realizations]{\includegraphics[width=.63\linewidth]{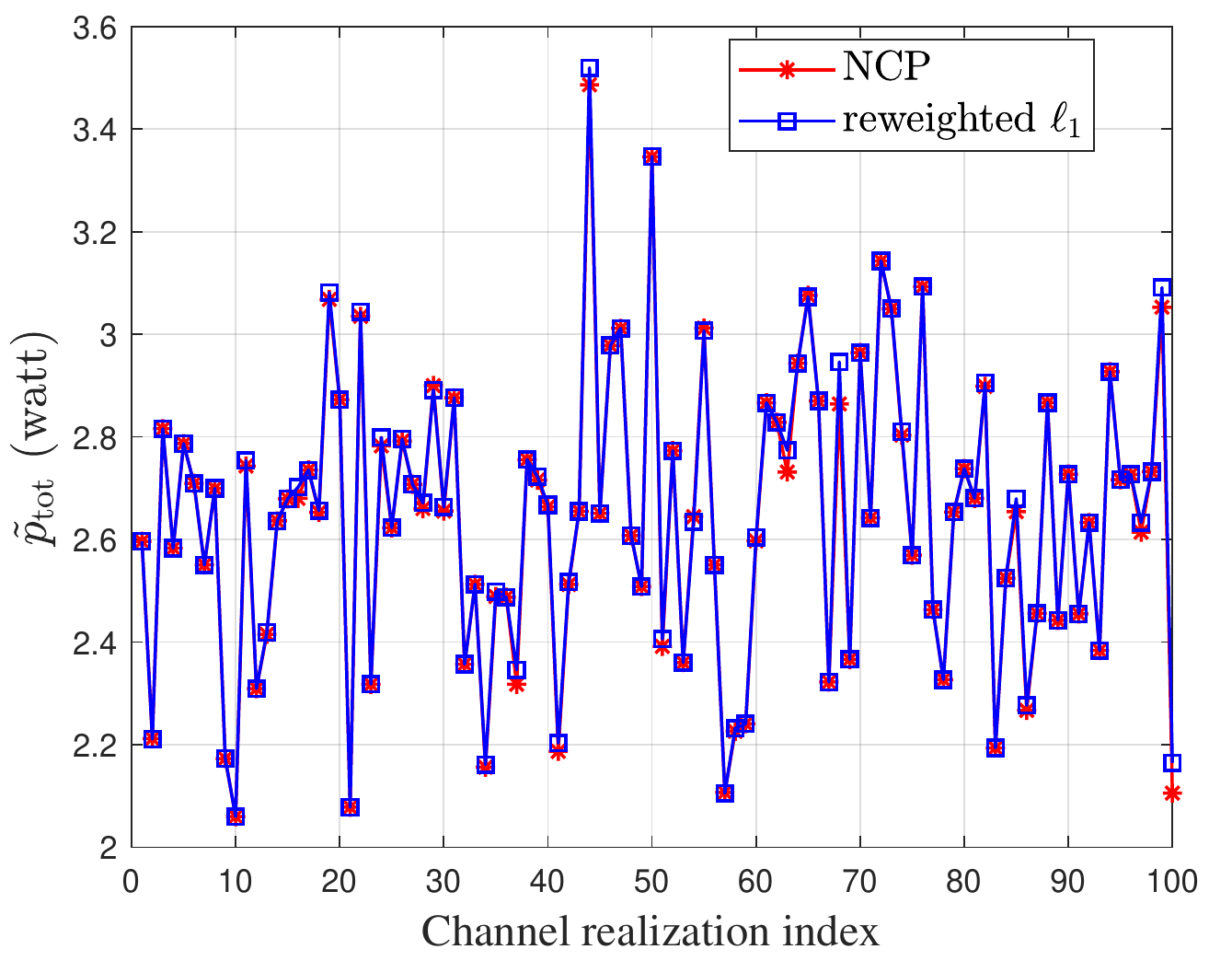}}\\[0.01mm]
	\subfloat[Number of iterations under $100$ channel realizations]{\includegraphics[width=.63\linewidth]{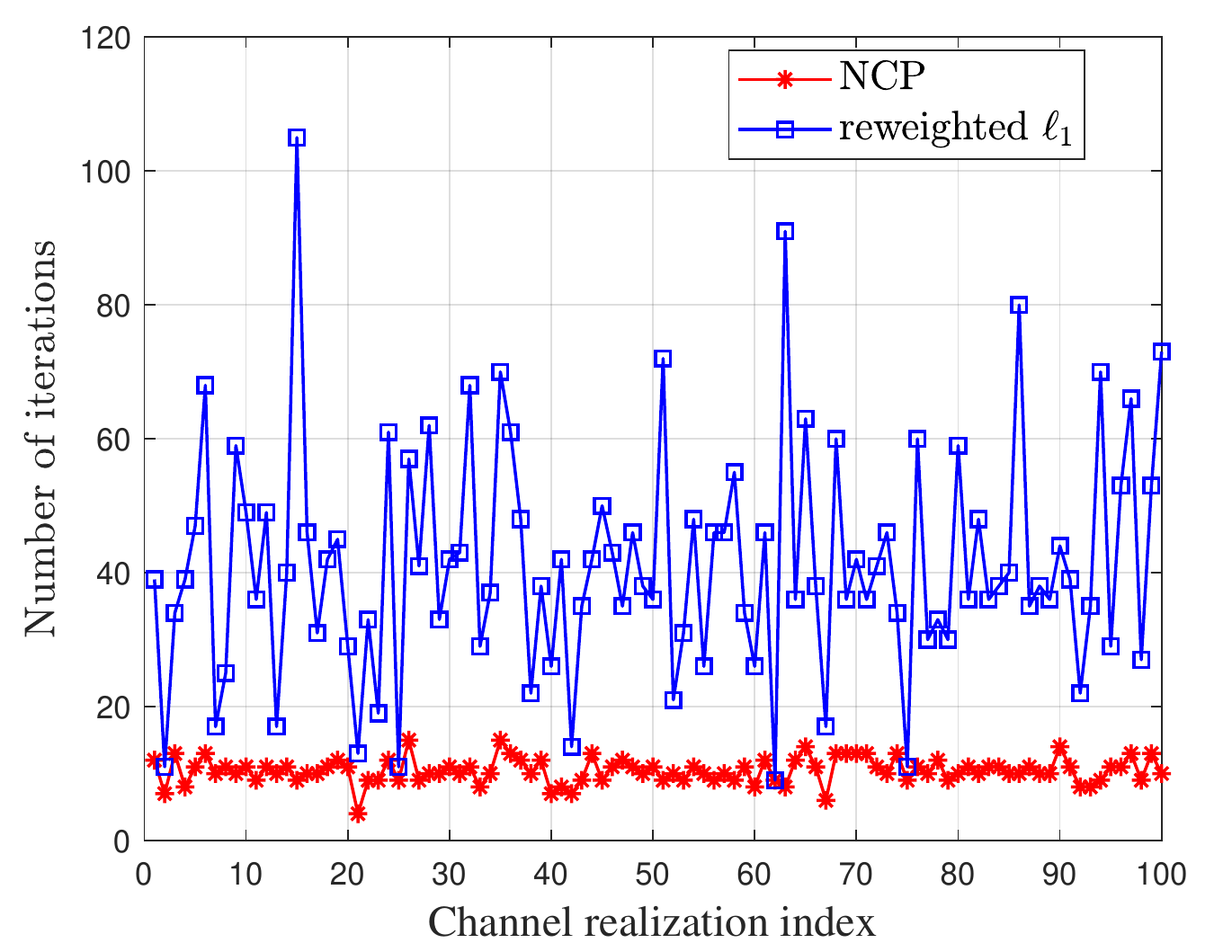}}
	\caption{Performance comparison between NCP and reweighted $\ell_1$.}
	\label{PerformanceContrast}
\end{figure}

Fig. \ref{initialPenalty} presents the convergence performance of the NCP approach under different values of the initial penalty factor $\lambda^{(0)}$ and scaling factor $\eta$. Note that the power initially decreases then increases and finally remains constant. The reason is that in the first few iterations, the penalty factor is small ($\ell_0$-norm constraints do not work), so the total transmit power $\tilde{p}_\mathrm{tot}$ will be reduced; as the number of iterations increases, the penalty factor becomes larger, and $\ell_0$-norm constraints begin to work, forcing the variable $\Phi$ to be sparse, so $\tilde{p}_\mathrm{tot}$ starts to increase, and eventually becomes stable. In addition, the initial values of $\lambda^{(0)}$ and $\eta$ result in different convergence performance, but they can all converge quickly.
\begin{figure}[!t]\centering
	\includegraphics[width=.63\linewidth]{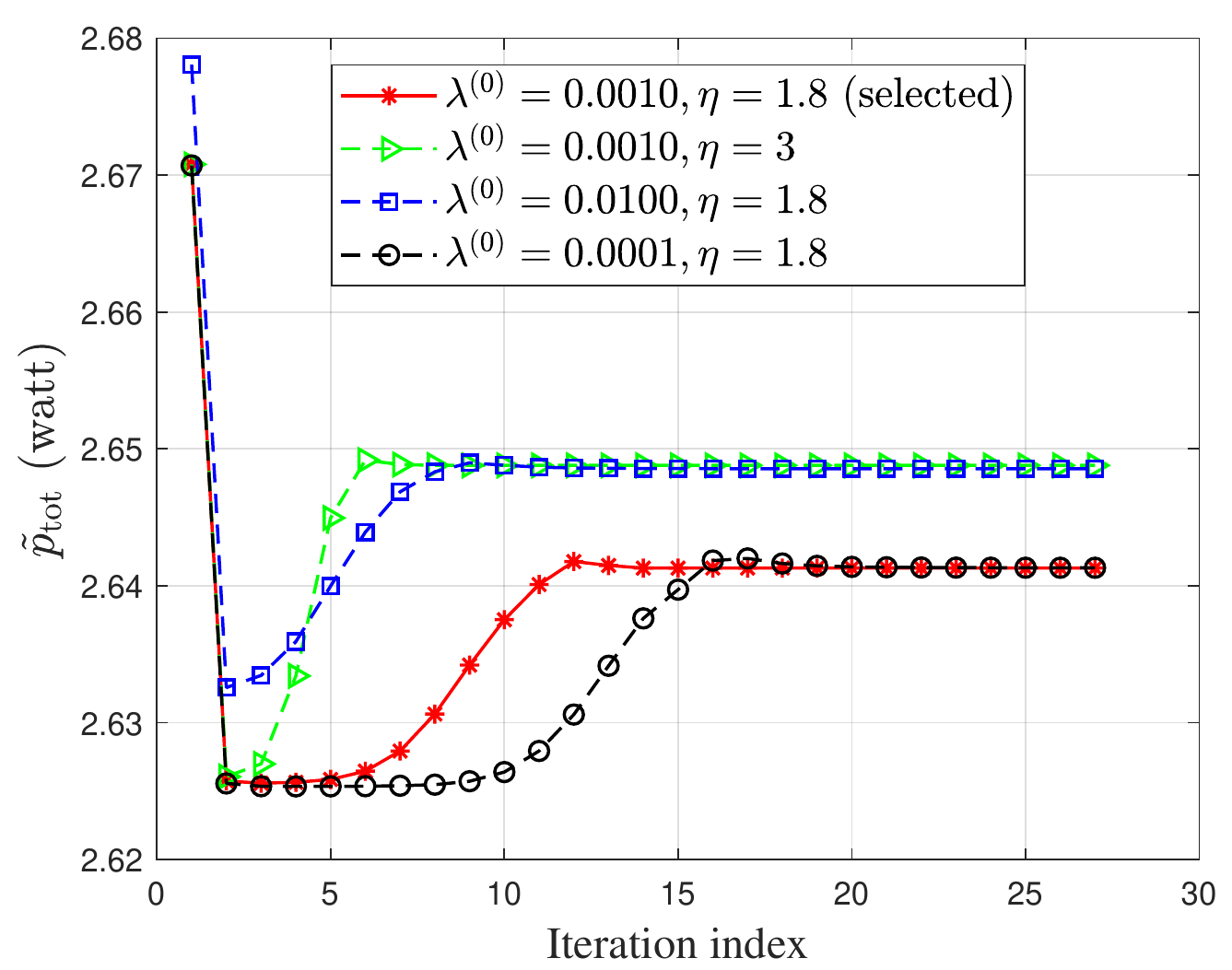}
	\caption{Convergence performance of the NCP-based algorithm under different values of $\lambda^{(0)}$ and $\eta$.}
	\label{initialPenalty}
\end{figure}

In Fig. \ref{Ptot_vs_reliability}, we plot the required total transmit power $\tilde{p}_\mathrm{tot}$ versus packet error probability $\varepsilon$ for different values of $N_t,\delta^2$ and $D_1$. As expected, $\tilde{p}_\mathrm{tot}$ monotonically decreases with the increase of allowable packet error probability $\varepsilon$. Also, the increase of the number of transmit antennas, the improvement of channel estimation accuracy and the relaxation of the delay requirement can reduce the total power consumption. Among them, the number of transmit antennas dominates the impact on power consumption. In Fig. \ref{Ptot_vs_reliability}(a), we note that as $N_t$ increases, the impact of the channel estimation error $\delta^2$ on $\tilde{p}_\mathrm{tot}$ becomes significantly small. This is due to the fact that more diversity gain provided by multiple antennas can compensate the effect of the imperfect channel estimation. This demonstrates that multiple antennas play a critical role for communication reliability, thereby reducing power consumption. In Fig. \ref{Ptot_vs_reliability}(b), there is an interesting finding that the solid line labeled with $D_1=2,\delta^2=0.01$ nearly coincides with the dashed line labeled with $D_1=4,\delta^2=0.05$. This indicates that the impact of shortening the delay to half can be offset by increasing the accuracy of channel estimation to $5$ times the original, which provides some interesting insights for the future URLLC research.
\begin{figure}[!t]\centering
	\subfloat[Effect of $N_t$ and $\delta^2$  with fixed $D_1=2$]{\includegraphics[width=.64\linewidth]{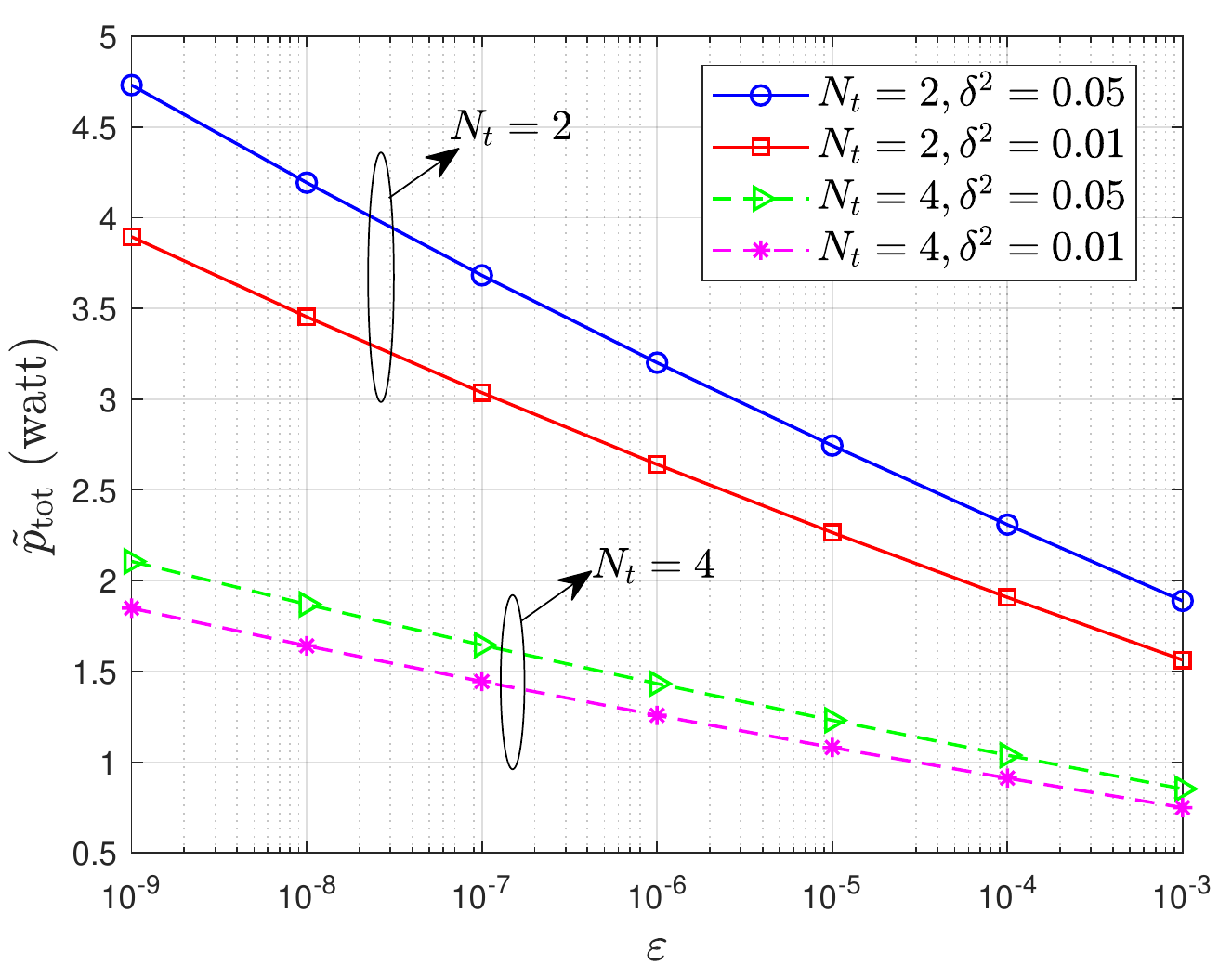}}\\[0.01mm]
	\subfloat[Effect of $D_1$ and $\delta^2$ with fixed $N_t=2$]{\includegraphics[width=.64\linewidth]{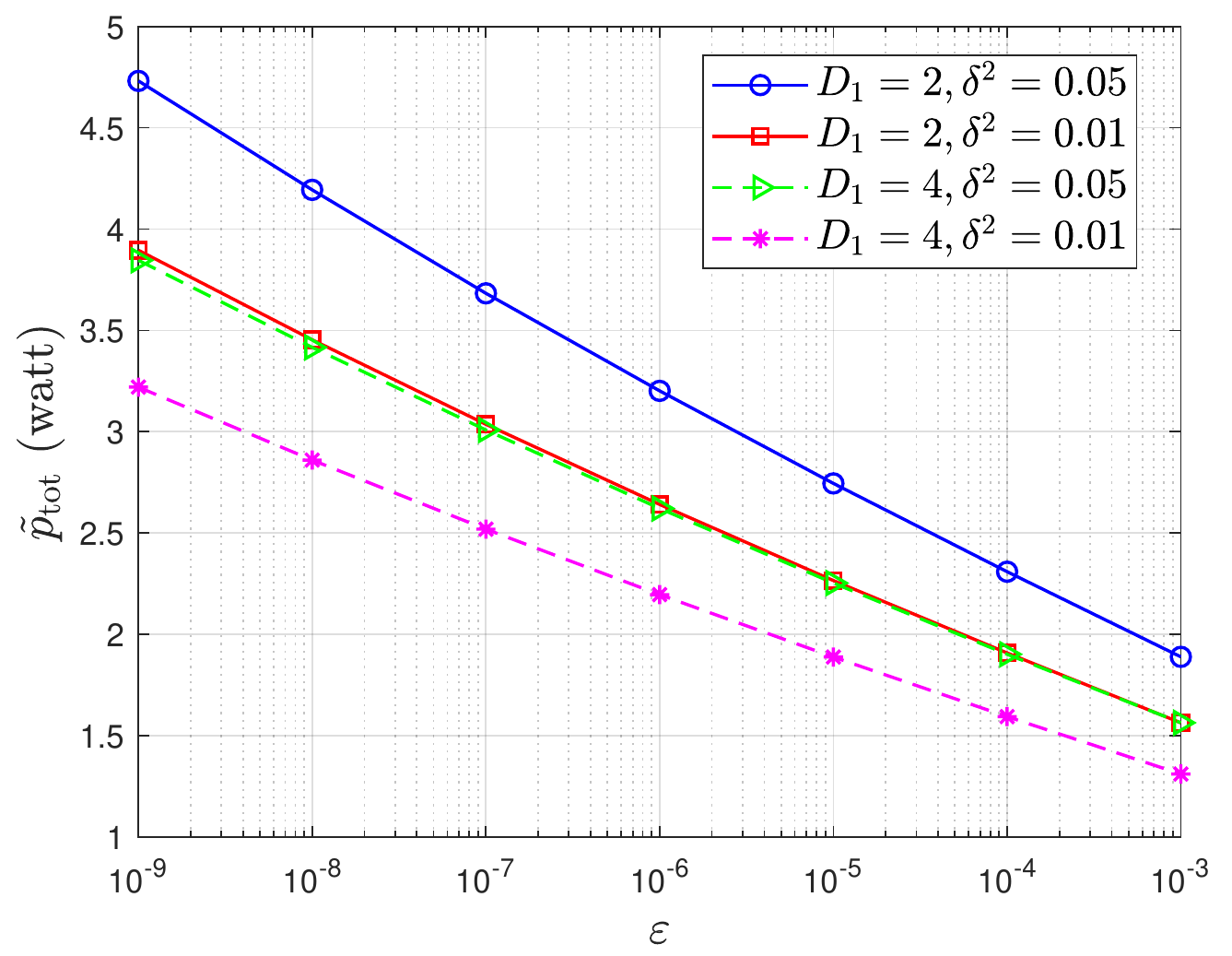}}
	\caption{$\tilde{p}_\mathrm{tot}$ versus $\varepsilon$ for different values of $N_t$, $\delta^2$ and $D_1$.}
	\label{Ptot_vs_reliability}
\end{figure}
\section{Conclusion}
In this work, we propose a robust power-efficient RB assignment scheme to guarantee each robot's URLLC requirements in a MISO-OFDMA system. The formulated mixed-integer robust design problem is resolved in two steps. For the worst-case SNR caused by the channel uncertainty, we leverage the property of the independence of each robot's beamformer design, and thereby obtain an equivalent joint design problem of power control and RB assignment. For the binary RB indicator constraints, we propose a novel NCP approach to guarantee the sparsity on each RB. The proposed penalized SCA based algorithm can yield a stationary-point solution. Numerical results show that the performance of NCP-based resource scheduling is always the same or superior to the well-known reweighted $\ell_1$ method. Also, NCP performs much faster than reweighted $\ell_1$ and is more robust to channel realizations. We also investigate the impacts of latency, reliability, number of transmit antennas and channel uncertainty on the system performance, which shed some light on the zero-delay URLLC under 6G. The cases of robust MIMO-URLLC transmission and efficient prediction of channels and traffic for near-zero latency by machine learning will be studied in the future work.

{\footnotesize

}

\end{document}